\newcommand{\SarielComp}[1]{}
\newcommand{\NotSarielComp}[1]{#1}%
\newcommand{\SarielComp}[1]{#1}%
\newcommand{\NotSarielComp}[1]{}%
\newcommand{\IfPrinterVer}[2]{#2}%
\newcommand{\hrefb}[3][black]{\href{#2}{\color{#1}{#3}}}%
\theoremstyle{plain}%
\newtheorem{theorem}{Theorem}[section]
\theoremstyle{plain}%
\newtheorem*{remark:unnumbered}[theorem]{Remark}%
\newcommand{\myqedsymbol}{\rule{2mm}{2mm}}
\theoremstyle{nonumberplain}%
\newtheorem{proof}{Proof:}%
\newcommand{\atgen}{\symbol{'100}}
\newcommand{\SarielThanks}[1]{\thanks{Department of Computer Science;
      University of Illinois; 201 N. Goodwin Avenue; Urbana, IL,
      61801, USA; {\tt sariel\atgen{}illinois.edu}; {\tt
         \url{http://sarielhp.org/}.} #1}}
\newcommand{\HLink}[2]{\hyperref[#2]{#1~\ref*{#2}}}
\newcommand{\HLinkSuffix}[3]{\hyperref[#2]{#1\ref*{#2}{#3}}}
\newcommand{\figlab}[1]{\label{fig:#1}}
\newcommand{\figref}[1]{\HLink{Figure}{fig:#1}}
\newcommand{\thmlab}[1]{{\label{theo:#1}}}
\newcommand{\thmref}[1]{\HLink{Theorem}{theo:#1}}
\newcommand{\seclab}[1]{\label{sec:#1}}
\newcommand{\secref}[1]{\HLink{Section}{sec:#1}}
\providecommand{\eqlab}[1]{}%
\renewcommand{\eqlab}[1]{\label{equation:#1}}
\newcommand{\remove}[1]{}%
\newcommand{\pth}[2][\!]{\mleft({#2}\mright)}%
\newcommand{\floor}[1]{\left\lfloor {#1} \right\rfloor}
\newcommand{\cardin}[1]{\left| {#1} \right|}%
\renewcommand{\Re}{\mathbb{R}}%
\providecommand{\Mh}[1]{#1}%
\newcommand{\etal}{\textit{et~al.}\xspace}
\providecommand{\Mh}[1]{#1}%
\newcommand{\Gr}{\Mh{G}}%
\newcommand{\GrA}{\Mh{H}}%
\newcommand{\VV}{\Mh{V}}
\newcommand{\VX}[1]{\VV\pth{#1}}%
\newcommand{\EX}[1]{E\pth{#1}}%
\newcommand{\Tree}{\Mh{T}}%
\newcommand{\TreeA}{\Mh{S}}%
\newcommand{\RSet}{\Mh{M}}%
\newcommand{\rsz}{\Mh{\mu}}%
\newcommand{\Sep}{\Mh{Z}}%
\newcommand{\email}[1]{\href{mailto:#1}{#1}}%
\begin{document}

\title{Separators for Planar Graphs that are Almost Trees}

\author{%
   Linda Cai%
   \thanks{\email{tcai4@illinois.edu}.}%
   \and %
   Sariel Har-Peled\SarielThanks{Work on this paper was partially
      supported by a NSF AF award CCF-1421231.  }%
   \and %
   Simiao Ye%
   \thanks{\email{sye11@illinois.edu}.}%
}

\date{\today}

\maketitle

\begin{abstract}
    We prove that a connected planar graph with $n$ vertices and
    $n+\rsz$ edges has a vertex separator of size $O( \sqrt{\rsz} + 1)$,
    and this separator can be computed in linear time.
\end{abstract}


\section{Result}

We first provide a relatively self-contained proof of the claim. A
shorter proof using known tools is described in \secref{shorter}.

\begin{theorem}
    \thmlab{sep:tree}%
    Let $\Gr$ be a vertex connected planar graph with $n$ vertices and
    $n+\rsz$ edges, and weights $w:\VX{\Gr} \rightarrow \Re^+$ on the
    vertices.  Then, one can compute, in linear time, a vertex
    separator for $\Gr$ of size $O( \sqrt{\rsz} + 1)$
\end{theorem}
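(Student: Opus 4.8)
The plan is to peel $\Gr$ down to a planar kernel with only $O(\rsz)$ vertices while bookkeeping the peeled-off weight, run a planar separator theorem of Lipton--Tarjan type on the kernel, and lift the result back. First dispose of the degenerate cases: if $\rsz\le 0$ the graph is a tree or is unicyclic and a separator of size at most two is immediate (a centroid; or, after folding each pendant tree's weight into its attachment vertex on the unique cycle, at most two cycle vertices splitting the resulting cyclic weight sequence into arcs of weight $\le\tfrac23 W$, where $W$ is the total weight), and if some single vertex holds more than $\tfrac23 W$ we output it. Otherwise $\rsz\ge 1$: iteratively delete each degree-one vertex, moving its weight to its neighbor, reaching the $2$-core $\Gr'$ (minimum degree $\ge 2$), so that $\Gr\setminus\Gr'$ is a disjoint union of trees, each attached to $\Gr'$ by a single edge at a single vertex; then suppress all internal vertices of the maximal degree-two paths of $\Gr'$, reaching $\Gr''$ with minimum degree $\ge 3$. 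Neither step changes $\cardin{\EX{\cdot}}-\cardin{\VX{\cdot}}$, so $\Gr''$ is planar with $\cardin{\EX{\Gr''}}=\cardin{\VX{\Gr''}}+\rsz$, whence minimum degree $\ge 3$ forces $\cardin{\VX{\Gr''}}\le 2\rsz$ and $\cardin{\EX{\Gr''}}\le 3\rsz$. Each vertex $v$ of $\Gr''$ absorbs the weight of the pendant trees hanging off it, and each edge $e$ of $\Gr''$ corresponds to a path $P_e$ of $\Gr'$ whose interior vertices, together with the pendant trees hanging off them, form a caterpillar $B_e$ (a tree) of weight $\omega_e$; these caterpillars, those pendant trees, and $\VX{\Gr''}$ partition $\VX{\Gr}$.

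Next, build a planar graph $\widehat\Gr$ from $\Gr''$ by subdividing once each edge $e$ with $\omega_e>0$, giving the new vertex $c_e$ weight $\omega_e$ and giving each $v\in\VX{\Gr''}$ its own weight plus that of its pendant trees; then $\widehat\Gr$ has at most $5\rsz$ vertices and total weight $W$. Apply the weighted planar separator theorem to $\widehat\Gr$ to obtain $\widehat\Sep$ with $\cardin{\widehat\Sep}=O(\sqrt\rsz)$ and every component of $\widehat\Gr-\widehat\Sep$ of weight at most $\tfrac23 W$. For the self-contained version this is exactly where the Lipton--Tarjan construction must be reproduced: BFS from a root, remove two suitable BFS levels straddling the weighted-median level so that the top and bottom parts each have weight at most $\tfrac23 W$, and split the remaining bounded-depth middle part using a fundamental cycle of a non-tree edge; this is the technical core and where essentially all of the work goes.

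Finally, lift $\widehat\Sep$ to $\Sep$: retain each vertex of $\widehat\Sep\cap\VX{\Gr''}$; and for each $c_e\in\widehat\Sep$ put into $\Sep$ the (at most two) extreme interior vertices of $P_e$ --- which already isolate $B_e$ from both endpoints of $P_e$ --- plus, if a leftover piece of $B_e$ still has weight more than $\tfrac23 W$, that piece's centroid (at most one more, since two such pieces cannot coexist). Thus $\cardin{\Sep}=O(\sqrt\rsz)$. One checks that the components of $\Gr-\Sep$ are precisely the lifts of the components of $\widehat\Gr-\widehat\Sep$ (replace each retained vertex of $\Gr''$ by itself plus its pendant trees, and each surviving $c_e$ by $B_e$) together with the leftover caterpillar pieces created when a $c_e$ was deleted; the former have exactly the weight of the corresponding component of $\widehat\Gr-\widehat\Sep$, hence at most $\tfrac23 W$, and the latter are at most $\tfrac23 W$ by the centroid choice. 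Every operation above is linear time, so $\Sep$ is computed in linear time.

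The \emph{main obstacle} is twofold. First, although Euler's formula instantly gives the $O(\rsz)$ bound on the kernel size, the weight accounting in the lift is delicate: a deleted $c_e$ must be realized by a cut that \emph{isolates} the caterpillar $B_e$ from both ends of $P_e$ --- not merely one that disconnects those two ends --- so that none of the weight of $B_e$ leaks into an adjacent component and unbalances it. Second, in the self-contained version one cannot simply cite the planar separator theorem but must re-derive the $O(\sqrt N)$ bound for the $N=O(\rsz)$-vertex weighted kernel via the layering-and-recursion argument, with the weights handled correctly throughout; this reconstruction is the real content of the proof.
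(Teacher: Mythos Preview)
Your approach is essentially the paper's: shrink $\Gr$ to an $O(\rsz)$-vertex planar kernel, run Lipton--Tarjan there, and lift the separator back. The paper kernelizes via a spanning tree $\Tree$, the Steiner subtree of $\Tree$ spanning the non-tree-edge endpoints, and contraction of its maximal paths; you kernelize via the $2$-core and suppression of degree-two vertices. These are interchangeable realizations of the same idea, and your subdivision trick (placing the caterpillar weight on a new vertex $c_e$) is in fact more explicit than the paper's bare phrase ``weights on the edges and vertices.'' The paper, for its part, does not re-derive Lipton--Tarjan; it simply cites it, so that portion of your outline is extra.

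There is one genuine omission in your lift. You enumerate two kinds of components of $\Gr-\Sep$: lifts of components of $\widehat\Gr-\widehat\Sep$, and leftover caterpillar pieces arising when some $c_e$ is deleted. You forget a third kind: when a vertex $v\in\widehat\Sep\cap\VX{\Gr''}$ is removed from $\Gr$, the pendant trees attached at $v$ become standalone components of $\Gr-\Sep$. One of these pendant trees can have weight exceeding $\tfrac23 W$ --- this is exactly the situation where the absorbed weight of $v$ in $\widehat\Gr$ exceeds $\tfrac23 W$, which is what forces $v$ into $\widehat\Sep$ in the first place. The paper isolates precisely this case and repairs it by adding the centroid of the heavy attached tree to the separator. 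Your own centroid patch for heavy caterpillar pieces applies verbatim here; you just need to extend the case split to cover pendant trees at separator vertices of $\Gr''$ as well. With that one-line fix, your argument goes through.
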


\begin{proof}
    The proof in depicted in \figref{proof:in:pics}.  Assume
    $\rsz > 0$, as otherwise the result is immediate.  Let $\Tree$ be
    a spanning tree of $\Gr$, and let $\RSet$ be the remaining
    $\rsz + 1$ edges; i.e., $\RSet = \EX{\Gr} \setminus
    \EX{\Tree}$. Let $\TreeA$ be the minimal subtree of $\Tree$ that
    contains all the vertices of $\VX{\RSet}$. Let $U$ be the set of
    all vertices of $\TreeA$ that are either of degree three (or
    higher), or belong to $\VX{\RSet}$. Since the only leafs of
    $\TreeA$ are vertices of $\VX{\RSet}$, it follows that
    $\cardin{U} \leq 2 \cardin{\VX{\RSet}} \leq 4 (\rsz + 1)$. %

    Decompose $\TreeA$ into maximal set of paths, such that their
    endpoints are in $U$ (and no vertex of $U$ is contained in the
    interior of such a path), and let $\Pi$ be this collection of
   paths. Observe that
    $\cardin{\Pi} \leq \cardin{U} = O(\rsz )$.

    Consider assigning the weight of every vertex of
    $\Tree \setminus \TreeA$ to its nearest vertex in $\TreeA$.  As
    such, under the new weights $w'$, we have $w'(\TreeA) = w(\Tree)$,
    where a weight of a path is the total weight of the vertices in
    its interior.
    
    Consider the planar graph $(U, \Pi \cup \RSet)$ with weights on
    the edges and vertices. It has $O(\rsz )$ vertices and $O(\rsz)$
    edges. As such, by Lipton and Tarjan planar separator theorem
    \cite{lt-stpg-79}, it has a balanced separator
    $\Sep \subseteq \VX{\Gr}$ of size
    $O( \sqrt{\cardin{U}}) = O( \sqrt{\rsz})$, as desired.

    A vertex $u \in \Sep$, with weight $w'(u)$ might see its weight
    decrease to $w(u)$ in the original graph, because of various trees
    attached to $u$ with total weight $w'(u) - w(u)$. Since $u$ is in
    the separator, all these trees get separated when $\Sep$ is
    removed. Namely, the separator set $\Sep$ is still a balanced
    separator in $\Gr$.
    
    The only case where the above argument fails, is if the computed
    separator has a single vertex $z$ with majority of the weight
    (i.e., $w'(z) \geq (2/3)w'(U)$. Furthermore, the vertex $z$ might
    have a tree $\Tree' $ attached to it with weight exceeding
    $(2/3)w(\Gr)$. But this can be fixed by just adding the vertex
    separator $z'$ of $\Tree'$ to the separator set, thus implying the
    claim (in particular, this case, the separator is made out of two
    vertices $z$ and $z'$).

    It is easy to verify that the above algorithm can be implemented
    in linear time.
\end{proof}

The above proof works for any family of graphs that is minor closed
and has a sublinear sized separator. In particular, if a graph in this
family with $\rsz$ vertices has separator of size $g(\rsz)$, then the
above proof implies that a connected graph in this family with $n$
vertices and $n+\rsz$ edges, has a separator of size $O(g(\rsz))$.


\newcommand{\Frame}[1]{%
   \noindent%
   \begin{minipage}{0.31\linewidth}
       \smallskip
       
       \centerline{\includegraphics[page=#1,width=0.9\linewidth]{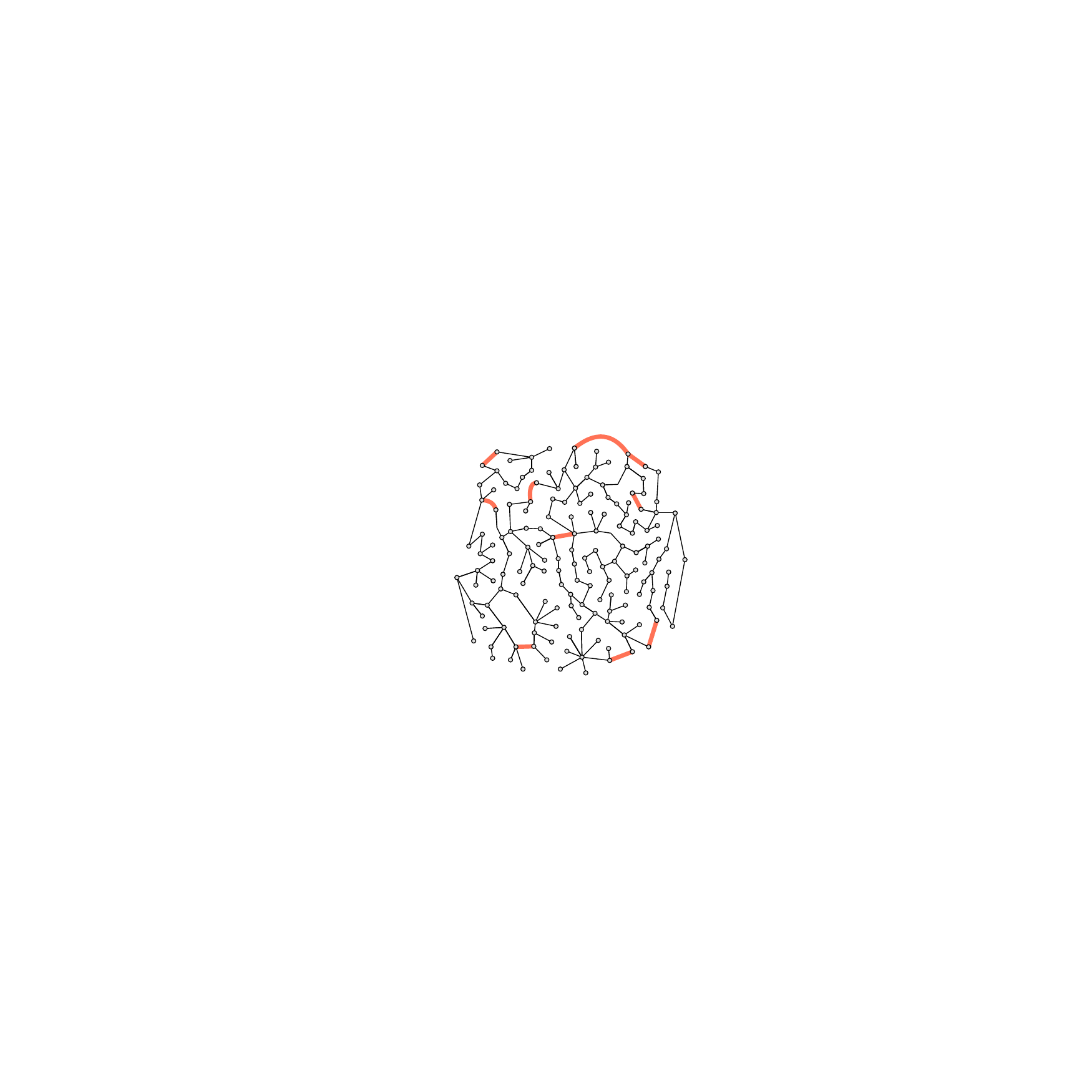}}%
       
       
       \smallskip%
   \end{minipage}%
} \newcommand{\FText}[1]{%
   \begin{minipage}[c]{0.30\linewidth}
       #1 \smallskip
   \end{minipage}%
}

\begin{figure}
    \begin{tabular}{c|c|c}
      \Frame{1}
      &
        \Frame{2}
      &
        \Frame{3}
      \\%
      \FText{(A) A tree with $\rsz+1$ additional edges.}
      &
        \FText{(B) The endpoints of the additional edges (i.e., $\RSet$).}%
      &
        \FText{(C) The spanning tree $\TreeA$ of these endpoints. }
      \\
      \hline%
      \Frame{4}%
      &%
        \Frame{5}
      &
        \Frame{6}
      \\
      \FText{(D) The additional vertices of degree $3$ in $\TreeA$
      (i.e., $U$).}
      &
        \FText{(E) The planar graph $\GrA$ induced by all the vertices of
        interest by the tree $S$, and the $\rsz+1$ additional edges. This
        graph has $O(\rsz)$ vertices and edges.%
        \smallskip
        }
      &
        \FText{(F) The weighted vertex separator }%
      \\
      \hline%
      \Frame{7}{}
      &%
        \Frame{8}{}
      &
        \Frame{9}{}
      \\%
      \FText{(G) There might be  one collapsed tree that has
      the majority of the mass of the original graph.}
      &
        \FText{(H) Add the vertex separator of this tree to the separator set.}
      &
        \FText{(I) The resulting separator in the original graph.}
    \end{tabular}
    \caption{A proof in pictures of \thmref{sep:tree}.}
    \figlab{proof:in:pics}%
\end{figure}


\subsection{A proof using known tools}
\seclab{shorter}

We next provide a shorter proof using known tools -- the proof was
pointed out to us by Chandra Chekuri.

\begin{proof}
    Let $\Tree$ be a spanning tree of $\Gr$, and let
    $\RSet = \EX{\Gr} \setminus \EX{\Tree}$ be the remaining set of
    $\rsz+1$ edges.

    Let $g$ be the treewidth of $\Gr$. By Robertson \etal
    \cite[Theorem 6.2]{rst-qexpg-94}, the graph $\Gr$ has a grid minor
    of size $g' \times g'$, where $u = \floor{ (g-5)/6}$. But then,
    $\Gr$ must contain $\floor{u/2}^2$ vertex disjoint cycles. However
    $\Gr$ contains at most $\rsz+1$ disjoint cycles, since every cycle
    must contain an edge of $\RSet$. It follows that $g^2 =
    O(\rsz)$. Namely, $g = O(\sqrt{\rsz})$. A graph with tree width
    $g$, has a separator of size $O(g)$, thus implying the graph has a
    separator of size $O( \sqrt{\rsz} +1)$. ~
\end{proof}

The above argument is well known. see Demaine \etal \cite[Theorem
4.4]{dfht-spalbg-05},


\paragraph*{Acknowledgments.}

We thank Chandra Chekuri for pointing out the proof in
\secref{shorter}, and Daniel Lokshtanov for relevant references.

 


\end{document}